\theoremstyle{plain}
\newtheorem{theorem}{Theorem}[section]
\newtheorem{proposition}[theorem]{Proposition}
\newtheorem{corollary}[theorem]{Corollary}
\theoremstyle{definition}
\newtheorem{example}{Example}
\newtheorem*{conjecture}{Conjecture}
\theoremstyle{remark}
\newcommand{\Suff}{\textit{Suff}}%{\operatorname{Suff}}
\newcommand{\Pref}{\textit{Pref}}%{\operatorname{Pref}}
\newcommand{\Fact}{\textit{Fact}}%{\operatorname{Fact}}
\renewcommand{\epsilon}{\varepsilon}
\begin{document}

\date{\today}

\title{On the Minimal Uncompletable Word Problem}

\author{
   \textsc{Gabriele Fici}
   \thanks{I3S, CNRS and Universit\'{e} de Nice-Sophia Antipolis, France.
           Mail: {\tt fici@i3s.unice.fr}}
              \and{\textsc{Elena V. Pribavkina}
   \thanks{Ural State University, 620083 Ekaterinburg, Russia.
           Mail: {\tt elena.pribavkina@usu.ru}}
   }
   \and{\textsc{Jacques Sakarovitch}
   \thanks{LTCI, CNRS / Telecom ParisTech, France.
           Mail: {\tt sakarovitch@enst.fr}}
   }
}

\maketitle

\begin{abstract}
Let $S$ be a finite set of words such that $\Fact(S^*)\ne\Sigma^*$. We deal with the problem of finding bounds on the minimal length of words in $\Sigma^{*}\setminus \Fact(S^{*})$ in terms of the maximal length of words in $S$.
\end{abstract}

%\keywords{}

%%%%%%%%%%%%%%%%%%%%%%%%
\section{Introduction}\label{sec:intro}
%%%%%%%%%%%%%%%%%%%%%%%%

A finite set $S$ of (finite) words over an alphabet $\Sigma$ is said to be \emph{complete} if $\Fact(S^{*})$, the set of factors of $S^{*}$, is equal to $\Sigma^{*}$, that is, if every word of $\Sigma^{*}$ is a factor of, or can be completed by multiplication on the left and on the right as, a word of $S^{*}$.

If $S$ is not complete, $\Sigma^{*}\setminus \Fact(S^{*})$ is not empty and a word in this set of minimal length is called a \emph{minimal uncompletable word} (with respect to the non-complete set $S$).

The problem of finding minimal uncompletable words and their length was introduced by Restivo  \cite{Restivo81}, who conjectured that there is a quadratic upper bound for the length of a minimal uncompletable word for $S$ in terms of the maximal length of words in $S$.

A more general related question of deciding whether a given regular language $L$ satisfies one of the properties $\Sigma^*=\Fact(L)$, $\Sigma^*=\Pref(L)$, $\Sigma^*=\Suff(L)$ has been recently considered by Rampersad et al.\ in \cite{Shall09}, where the computational complexity of the aforesaid problems in case $L$ is represented by a DFA or NFA is studied. In the particular case $L=S^{*}$ for $S$ being a finite set of words -- which is the case that is of interest for us -- the authors mention that the complexity of deciding whether or not $\Sigma^*=\Fact(S^{*})$ is still an open problem.

In this note, we show by mean of an example that the length of a minimal uncompletable word for a set $S$ whose longest word is of length $k$ seems to grow as $3k^{2}$ asymptotically and at least gets larger than $2k^{2}$ for effectively computed values, thus improving on a previous example given by Antonio Restivo \cite{Restivo81}.
The computations of a minimal uncompletable word for the successive values of $k$ in the parametrized example were made on the {\sc Vaucanson} platform for computing automata \cite{vaucanson}. This result is briefly mentioned in \cite{BePeReu09}.

The previous attempts to studying non-complete sets of words lead us to the following formulation.

Let $S\subseteq \Sigma^{*}$ and denote by

\begin{eqnarray*}
uwl(S)=\left\{ \begin{array}{ll}
\min\{|x| \mbox{ : } x\in \Sigma^{*}\setminus \Fact(S^{*})\} & \mbox{if $\Sigma^{*}\setminus \Fact(S^{*})\neq 0$},\\
0 & \mbox{otherwise}
\end{array} \right.\\
\end{eqnarray*}

and by

$$UWL(k,\sigma)=\max\{uwl(S) \mbox{ : } S\subseteq \Sigma^{\leq k}, |\Sigma|=\sigma\}$$

In fact we shall be interested by the case of binary alphabet, and we write $UWL(k)=UWL(k,2)$. The problem is to find upper and lower bounds for $UWL(k)$.

%%%%%%%%%%%%%%%%%%%%%%%%%%%%%%%%%%%%%%%%%%%%%%%%%%%%%%%%%%%%%%%%%%%%%%%%%%%%%%%%%%%%%%%%%%%%
\section{Bounds on the length of minimal uncompletable words}\label{sec:uncovered}
%%%%%%%%%%%%%%%%%%%%%%%%%%%%%%%%%%%%%%%%%%%%%%%%%%%%%%%%%%%%%%%%%%%%%%%%%%%%%%%%%%%%%%%%%%%%

\begin{proposition}\cite{Restivo81}
\label{Restivo}
Let $k$ be an integer and let $S$ be a finite set of words whose maximal length is $k$ and such that there exists a word $u$ of length $k$ with the property that no element of $S$ is a factor of $u$. Then $S$ is non-complete and the word

$$w=(ua)^{k-1}u\hspace{10mm}\mbox{with $a\in \Sigma$}$$

is an uncompletable word for $S$.
\end{proposition}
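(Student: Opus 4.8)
The plan is to prove the single statement that $w=(ua)^{k-1}u$ is not a factor of any word of $S^{*}$; this simultaneously shows $\Fact(S^{*})\neq\Sigma^{*}$, so that $S$ is non-complete, and that $w$ is uncompletable. So I would argue by contradiction: assume $w$ occurs as a factor of some $z=s_{1}s_{2}\cdots s_{m}$ with each $s_{i}\in S$. The factorization of $z$ into the $s_{i}$ induces a set of \emph{cut points} (the boundaries between consecutive factors); restricting attention to the occurrence of $w$, these cuts divide $w$ into consecutive \emph{pieces}, each of length at most $k$ (since every $s_{i}$ has length at most $k$), where each piece lying strictly between two cuts is a full factor, hence an element of $S$. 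I would then track these cuts relative to the $k$ copies of $u$ inside $w$ (call them the \emph{blocks}) and the $k-1$ separating letters $a$.

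The first key step is to show that each block contains at least one cut in its interior: otherwise an entire copy of $u$ would sit inside a single factor $s_{i}$, making $u$ a factor of $s_{i}$; since $|s_{i}|\le k=|u|$ this forces $s_{i}=u$, so $u\in S$, contradicting the hypothesis that no element of $S$ is a factor of $u$ (as $u$ is a factor of itself). The second key step is dual: no full piece can lie inside a single block, since such a piece is an element of $S$ and at the same time a factor of $u$, again contradicting the hypothesis. Consequently each block carries \emph{exactly one} interior cut, and every piece strictly between two cuts must contain at least one of the $k-1$ separators $a$.

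The counting then closes the argument. The first step gives at least $k$ interior cuts, while the second step, together with the fact that each of the $k-1$ separators lies in a single piece, shows that there can be at most $k-1$ pieces strictly between consecutive interior cuts; these two bounds force exactly $k$ interior cuts, one per block and none inside the separator regions. Writing the unique cut of block $j$ at offset $t_{j}\in\{1,\dots,k-1\}$ inside that copy of $u$, the piece joining the cut of block $j$ to the cut of block $j+1$ is then a single element of $S$ spanning exactly one separator, so its length $(k+1)+t_{j+1}-t_{j}$ is at most $k$. This yields $t_{j+1}<t_{j}$ for every $j$, i.e.\ a strictly decreasing sequence $t_{1}>t_{2}>\cdots>t_{k}$ of $k$ values drawn from the $(k-1)$-element set $\{1,\dots,k-1\}$, which is impossible.

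I expect the main obstacle to lie in the bookkeeping of the counting step rather than in any single idea. One must argue carefully that the two extremal pieces of the occurrence of $w$ (the prefix preceding the first cut and the suffix following the last cut) are only a suffix and a prefix of factors of $S$, and so are exempt from the ``no element of $S$ inside a block'' constraint, whereas every piece strictly between two cuts is a genuine element of $S$; and one must rule out stray cuts inside the separator regions so as to pin the count down to exactly $k$ cuts. Once these edge cases are settled, the pigeonhole contradiction on the offsets $t_{j}$ is immediate.
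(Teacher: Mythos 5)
Your proof is correct, but there is nothing in the paper to compare it against: Proposition~\ref{Restivo} is stated with a citation to Restivo's original article and no proof is given in the text (the only proof the paper contains is that of Proposition~\ref{prop:struct}, which is a different, minimality-based argument). Your write-up therefore supplies the missing argument, and its skeleton is sound. Step (i): if some copy of $u$ contained no cut in its interior, then $u$ would be a factor of a single $s_i$, and $|s_i|\le k=|u|$ forces $s_i=u\in S$, contradicting the hypothesis, since $u$ is a factor of itself. Step (ii): a piece lying strictly between two cuts is a full element of $S$, so it cannot sit inside a copy of $u$; hence each of the $k$ blocks carries exactly one interior cut, every middle piece must contain one of the $k-1$ separator letters, and the resulting bounds (at least $k$ cuts, at most $k$ cuts because the $N-1$ middle pieces inject into the $k-1$ separators) pin the configuration down to exactly one cut per block and none at the block/separator boundaries. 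Step (iii): the piece joining the cut of block $j$ at offset $t_j$ to the cut of block $j+1$ at offset $t_{j+1}$ has length $(k+1)+t_{j+1}-t_j\le k$, giving $t_1>t_2>\cdots>t_k$ with all $t_j\in\{1,\dots,k-1\}$, which is impossible. Two details worth making explicit in a final version: the hypothesis already forces $\epsilon\notin S$ (the empty word is a factor of $u$), which is what guarantees that consecutive cuts are distinct and that middle pieces are nonempty elements of $S$; and, as you note, the two edge pieces of the window are only a suffix and a prefix of words of $S$, but they never enter the counting, so their exemption from step (ii) is harmless. Incidentally, this offset-tracking technique is essentially the one the paper alludes to when it invokes \cite{Prib09} for the matching lower bound $k^2+k-1$, so your argument is very much in the spirit of the surrounding results even though the paper never spells it out.
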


A direct consequence of this statement is then

\begin{corollary}\cite{Restivo81}
For any integer $k\geq 2$ and any word $u$ in $\Sigma^{k}$, the set $S=\Sigma^{k}\setminus \{u\}$ is non-complete.
\end{corollary}

Actually, if $S=\Sigma^k\setminus\{u\}$ and $u$ is an unbordered word, it can be proved that the uncompletable word from Proposition \ref{Restivo} is also the shortest such word:

\begin{proposition}\cite{Prib09}
For any integer $k\geq 2$ and any unbordered word $u\in\Sigma^k$, a shortest uncompletable word of $S=\Sigma^{*}\setminus \{u\}$ has length $k^2+k-1$.
\end{proposition}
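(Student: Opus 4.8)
The plan is to establish the two inequalities $uwl(S)\le k^{2}+k-1$ and $uwl(S)\ge k^{2}+k-1$ separately, for $S=\Sigma^{k}\setminus\{u\}$. The upper bound is immediate from Proposition~\ref{Restivo}: since every element of $S$ has length exactly $k$, the only length-$k$ factor of $u$ is $u$ itself, which is excluded from $S$, so no element of $S$ is a factor of $u$. The hypothesis of Proposition~\ref{Restivo} is thus met, the word $w=(ua)^{k-1}u$ is uncompletable, and as $|w|=(k+1)(k-1)+k=k^{2}+k-1$ we obtain $uwl(S)\le k^{2}+k-1$.

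For the lower bound I would first translate completability into a combinatorial condition. Since $S\subseteq\Sigma^{k}$, a word lies in $S^{*}$ iff its length is a multiple of $k$ and, when cut into consecutive blocks of length $k$, none of these blocks equals $u$. Hence $w$ is a factor of $S^{*}$ iff there is a choice of offset $t\in\{0,\dots,k-1\}$ of the block boundaries relative to $w$, together with suitable padding, so that every length-$k$ block falling entirely inside $w$ avoids $u$; the two blocks straddling the ends of $w$ each contain at least one free padding letter (using $|\Sigma|\ge 2$) and can always be made different from $u$. For a fixed offset $t$, the blocks lying entirely inside $w$ are exactly the length-$k$ factors of $w$ beginning at positions congruent to $-t$ modulo $k$, and letting $t$ range over $\{0,\dots,k-1\}$ sweeps all residues. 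Therefore $w$ is uncompletable if and only if $u$ occurs in $w$ at positions covering every residue class modulo $k$.

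The decisive step is to bound the length of a word realizing this covering, and this is where unborderedness enters. If $u$ is unbordered, two occurrences of $u$ cannot overlap, so any two occurrence positions differ by at least $k$. Assuming $w$ uncompletable, I would select occurrences at positions $p_{0}<p_{1}<\dots<p_{k-1}$ with pairwise distinct residues modulo $k$ (one per class). Writing each gap as $p_{l+1}-p_{l}=k+r_{l}$ with $r_{l}\ge 0$, the residues of $p_{l}-p_{0}$ are the partial sums $R_{l}=r_{0}+\dots+r_{l-1}$ modulo $k$; their being $k$ distinct residues forces the non-decreasing sequence $0=R_{0}\le\dots\le R_{k-1}$ to be a complete residue system, whence $R_{k-1}\ge k-1$. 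Consequently $p_{k-1}-p_{0}=(k-1)k+R_{k-1}\ge k^{2}-1$, and since $w$ must contain the final occurrence we get $|w|\ge p_{k-1}+k\ge k^{2}+k-1$.

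Combining the two bounds yields $uwl(S)=k^{2}+k-1$. I expect the main obstacle to be the first reduction, namely the careful verification that the straddling boundary blocks can always be completed away from $u$ and that no hidden constraint survives there; once completability is reduced to residue-covering, the role of unborderedness (non-overlapping occurrences) and the ensuing arithmetic optimization are the conceptual heart but are routine to carry out. A secondary point worth checking is that using more than $k$ occurrences cannot shorten $w$, since with $k+1$ occurrences the last position is already at least $k^{2}$.
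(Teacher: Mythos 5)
Your proof is correct, but there is no proof in the paper to compare it against: the paper states this proposition without proof, citing \cite{Prib09} (note in passing that the statement's $S=\Sigma^{*}\setminus\{u\}$ is a misprint for $S=\Sigma^{k}\setminus\{u\}$, which is what you in fact work with). Both halves of your argument check out. The upper bound is indeed immediate from Proposition~\ref{Restivo}, since the only length-$k$ factor of $u$ is $u$ itself and $|(ua)^{k-1}u|=k^{2}+k-1$. For the lower bound, your reduction of completability to residue-covering is sound: because $S\subseteq\Sigma^{k}$, a word $xwy$ lies in $S^{*}$ iff its length is a multiple of $k$ and no length-$k$ block equals $u$; taking padding of length $<k$ on each side, only the first and last blocks meet the padding, each has at least one free letter and so can be forced to differ from $u$ (here $|\Sigma|\ge2$ is automatic, since over a one-letter alphabet no word of length $k\ge2$ is unbordered), and hence $w\in\Fact(S^{*})$ iff some residue class modulo $k$ contains no starting position of an occurrence of $u$ in $w$. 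Unborderedness then enters exactly as you say: overlapping occurrences would force a nonempty proper border, so occurrence positions differ by at least $k$, and your partial-sum argument ($R_{0},\dots,R_{k-1}$ distinct modulo $k$ and non-decreasing, hence strictly increasing, hence $R_{k-1}\ge k-1$) yields $|w|\ge p_{k-1}+k\ge k^{2}+k-1$. The obstacle you flagged (the straddling blocks) is thus not an issue, and your criterion has pleasant by-products: it re-derives the uncompletability of $(ua)^{k-1}u$ directly, since its designed occurrences of $u$ at positions $0,k+1,\dots,(k-1)(k+1)$ hit every residue class modulo $k$, and it is also the natural route to the structural Proposition~\ref{prop:struct} (a minimal $w$ must begin and end with an occurrence of $u$, else trimming a letter preserves the covering property).
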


\begin{corollary} For any $k\ge2$ we have $UWL(k)\ge k^2+k-1.$
\end{corollary}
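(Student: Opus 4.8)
The plan is to obtain the lower bound by exhibiting a single family of sets that witnesses it. Since $UWL(k)$ is defined as the maximum of $uwl(S)$ over all $S\subseteq\Sigma^{\le k}$ on a binary alphabet, any particular admissible $S$ immediately gives $UWL(k)\ge uwl(S)$. The natural candidate is the set $S=\Sigma^{k}\setminus\{u\}$ furnished by the preceding Corollary, whose shortest uncompletable word has length exactly $k^{2}+k-1$ by the foregoing Proposition --- provided that $u$ can be taken unbordered and that the whole construction can be carried out over an alphabet of size two.

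First I would fix the binary alphabet, say $\Sigma=\{a,b\}$, and produce an unbordered word $u\in\Sigma^{k}$ for every $k\ge2$. The word $u=a^{k-1}b$ serves: every proper nonempty suffix of $u$ ends in the letter $b$, whereas every proper nonempty prefix of $u$ (of length at most $k-1$) consists solely of the letter $a$, so no proper nonempty prefix can coincide with a suffix, and $u$ has no nontrivial border. Hence an unbordered binary word of each length $k\ge2$ indeed exists.

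Then I would set $S=\Sigma^{k}\setminus\{u\}$ with this $u$. This $S$ is a finite subset of $\Sigma^{k}\subseteq\Sigma^{\le k}$ over the binary alphabet, so it is an admissible competitor in the definition of $UWL(k)$. By the preceding Proposition, a shortest uncompletable word of $S$ has length $k^{2}+k-1$, which means $uwl(S)=k^{2}+k-1$. Combining this with the observation that $UWL(k)$ dominates $uwl(S)$ for every admissible $S$ yields $UWL(k)\ge k^{2}+k-1$, as claimed.

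I do not expect a genuine obstacle: the statement is a direct specialization of the foregoing Proposition, and the only point requiring care is confirming that an unbordered word of the required length exists over a two-letter alphabet, so that the Proposition applies in the binary setting underlying $UWL(k)$. The explicit choice $u=a^{k-1}b$ disposes of this, and no further estimates are needed.
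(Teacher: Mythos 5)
Your proposal is correct and is essentially the argument the paper intends: the corollary is a direct specialization of the preceding Proposition to a concrete binary set $S=\Sigma^{k}\setminus\{u\}$ with $u$ unbordered, combined with the fact that $UWL(k)$ dominates $uwl(S)$ for any admissible $S$. The paper leaves this immediate derivation implicit; your only addition is the explicit witness $u=a^{k-1}b$ verifying that an unbordered binary word of every length $k\ge2$ exists, which is a point the paper silently assumes.
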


Of course, if $S$ contained in $\Sigma^{*}$ is non-complete and if $S\cup T$ is also contained in $\Sigma^{*}$ and non-complete, any uncompletable word for $S\cup T$ is uncompletable for $S$ and $uwl(S\cup T)\geq uwl(S)$.

The ``game'' is thus to start from a set $S$ of the form $\Sigma^{*}\setminus \{u\}$ and to find a subset $T$ of words of length shorter than $k$ such that $S\cup T$ remains non-complete and the length of minimal uncompletable words increases as much as possible. This is the way that the bound $k^{2}+k-1$ was already improved in \cite{Restivo81}:

\begin{example}
Let $k=4$ and let
$$S_4=\Sigma^4\setminus \{aabb\}\cup \{ab,ba,aba,baa,bab,bba\}$$
Then
$$w=(aabb)aaa(aabb)baa(aabb)bbb(aabb)$$
is a minimal uncompletable word for $S_4$.
\end{example}

Note that in this example the shortest uncompletable word maintains the structure $uv_1uv_2\cdots uv_{k-1}u$ of the uncompletable word from Proposition \ref{Restivo}, but the intermediate words $v_i$ this time have length $k-1$. This example led Restivo to conjecture that $UWL(k)\le2k^{2}$. More precisely:

\begin{conjecture}\cite{Restivo81}
If $S$ is a non-complete set and $k$ is the maximal length of words in $S$, there exists an uncompletable word of length at most $2k^2$. Moreover this word is of the form $uv_1uv_2u\cdots uv_{k-1}u$, where $u$ is the suitable word of length $k$ and $v_1,v_2,\ldots,v_{k-1}$ are words of length less than or equal to $k$.
\end{conjecture}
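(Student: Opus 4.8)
The plan is to prove the statement constructively, generalizing the argument behind Proposition~\ref{Restivo}. The first step is to recast completability as a reachability question on an automaton. Let $\A$ be the automaton recognizing $S^{*}$ whose states are the proper prefixes of words of $S$ (including $\epsilon$), with $\epsilon$ as the unique initial and final state and a reset to $\epsilon$ at the end of each word of $S$. In $\A$ every state is both accessible from $\epsilon$ (read the prefix) and co-accessible to $\epsilon$ (complete the current word of $S$), so a word $w$ lies in $\Fact(S^{*})$ if and only if $\A$ admits at least one run labelled $w$, from some state to some state. Hence to exhibit an uncompletable word it suffices to produce a $w$ on which every run of $\A$ gets stuck.

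Next I would pin down the word $u$. Since $S$ is non-complete we have $\Sigma^{k}\setminus S\neq\emptyset$: if $\Sigma^{k}\subseteq S$ then $(\Sigma^{k})^{*}\subseteq S^{*}$, and every word is a factor of some element of $(\Sigma^{k})^{*}$, forcing completeness. So I may fix some $u\in\Sigma^{k}\setminus S$, choosing it to make the set of states of $\A$ from which $u$ can be fully read as small as possible (the special case of Proposition~\ref{Restivo} corresponds to a $u$ having no word of $S$ as a factor, which is the most favourable situation but need not be attainable for general $S$, as the example $S_4$ shows).

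The heart of the argument is a phase analysis. Since $|u|=k$ and every word of $S$ has length at most $k$, any occurrence of $u$ inside a run of $\A$ is spanned by consecutive words of $S$, and the way the word-boundaries of $S$ fall across it is captured by a single parameter, the \emph{phase} $\phi\in\{0,1,\ldots,k-1\}$, namely the offset within the current word of $S$ at the point where $u$ begins. Consequently the state of $\A$ at the start of each of the $k$ occurrences of $u$ in $w=uv_{1}u\cdots uv_{k-1}u$ ranges over a set of at most $k$ admissible values, one per phase. The connectors are the tool for controlling how the phase at one occurrence constrains the phase at the next: reading $uv_{i}$ from an admissible state lands in a restricted set of states, and I would pick each $v_{i}$ (of length $\le k$) so as to eliminate one further phase. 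Excluding phase $0$ at the outset by the choice $u\notin S$, and devoting the $k-1$ connectors to the $k-1$ surviving phases, every admissible entry phase is eventually ruled out, leaving no run on $w$; the length bound $|w|\le k^{2}+(k-1)k=2k^{2}-k\le 2k^{2}$ then follows directly from the prescribed form.

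The hard part, and the step on which the whole scheme rests, is the quantitative claim that for each phase a single connector of length at most $k$ can be found that kills that phase without re-opening any other. This is delicate: the connectors interact, and a word that blocks passage in one phase may create new admissible transitions in another, so the phases do not obviously admit a monotone sweep by $k-1$ short words. I expect this to be the genuine obstacle, and precisely the point at which the uniform $2k^{2}$ bound is in jeopardy: should it turn out that for some non-complete $S$ the blocking of all phases provably requires more than $k-1$ connectors, or connectors longer than $k$, then the shortest uncompletable word of the prescribed form would exceed $2k^{2}$. Settling whether $k-1$ connectors of length at most $k$ always suffice is therefore the crux I would concentrate my effort on.
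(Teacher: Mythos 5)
You are attempting to prove a statement that the paper quotes only in order to refute it. This is Restivo's \emph{conjecture}, and the paper's contribution is a counterexample, not a proof: Example~\ref{ex:contrex} exhibits the sets $S'_k=\Sigma^k\setminus \{a^{k-2}bb\} \cup \Sigma ba^{k-4}\Sigma \cup \Sigma ba \cup b^{4}\cup J_{k}$, obtained from the sets $S_k$ of Example~\ref{genrest} by adding the single word $b^4$, for which the computed minimal uncompletable word has length $3k^2-9k+1$ for $7\le k\le 12$. Since $3k^2-9k+1>2k^2$ as soon as $k\ge 9$, the bound $2k^2$ is simply false, and no proof along your lines (or any other) can exist. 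So the gap is not a missing lemma but the overall direction: the ``crux'' you isolated at the end --- whether $k-1$ connectors of length at most $k$ always suffice to kill all phases --- is precisely where the conjecture dies, and to your credit you flagged it as the step in jeopardy rather than claiming it. The counterexample makes this concrete: the minimal uncompletable word for $S'_k$ contains $2k-6$ occurrences of $u=a^{k-2}bb$, so far more than $k-1$ connectors are needed, confirming your worry that blocking one phase can re-open others and that no monotone sweep exists in general.

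Two further points. First, even the structural half of the conjecture (connectors of length at most $k$) fails independently of the length bound: the paper's last example shows that for $S_5$ the minimal uncompletable word $w=(a^{3}bb)aaaa(a^{3}bb)baa(a^{3}bb)bbabaa(a^{3}bb)baa(a^{3}bb)$ contains the connector $bbabaa$ of length $6>k=5$. Second, a local flaw in your setup: you exclude phase $0$ ``by the choice $u\notin S$,'' but phase $0$ only means that an occurrence of $u$ starts at a word boundary of the $S^*$-covering; since words of $S$ may be shorter than $k$, such an occurrence can be covered by a concatenation of several words of $S$ without $u$ itself belonging to $S$, so $u\notin S$ does not rule that phase out. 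This is secondary, however; the fatal problem is that you are trying to prove a false statement, and the correct deliverable here is the counterexample the paper provides, together with the observation (also in the paper, via Proposition~\ref{prop:struct}) that what \emph{can} be salvaged is the weaker structural fact that minimal uncompletable words for $S=\Sigma^k\setminus\{u\}$ with $u$ unbordered begin and end with $u$.
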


\begin{example}\label{genrest}
Let $k>4$ and let
$$S_k=\Sigma^k\setminus \{a^{k-2}bb\} \cup \Sigma ba^{k-4}\Sigma \cup \Sigma ba \cup J_{k}$$
where $J_{k}=\bigcup_{i=1}^{k-3}(ba^{i}\Sigma \cup a^{i}b)$. We computed that for $5\leq k\leq 12$ the word
$$w=(a^{k-2}bb) a^{k-1} (a^{k-2}bb) ba^{k-2} ((a^{k-2}bb) b^{2}a^{k-3})^{k-3} (a^{k-2}bb)$$
is a minimal uncompletable word for $S_k$. Thus $UWL(k)\geq 2k^{2}-2k+1$ for $5\le k\le12$.
Using a similar technique as in \cite{Prib09}, it can be proved that this word is uncompletable for each $k\geq 5$, but we are not aware whether this word is minimal uncompletable for $k>12$.
\end{example}

Unfortunately, it is not true in general that $UWL(k)\le2k^{2}$. Indeed, we have

\begin{example}\label{ex:contrex}
Let $k>6$ and let

$$S'_k=\Sigma^k\setminus \{a^{k-2}bb\} \cup \Sigma ba^{k-4}\Sigma \cup \Sigma ba \cup b^{4}\cup J_{k}$$

where $J_{k}=\bigcup_{i=1}^{k-3}(ba^{i}\Sigma \cup a^{i}b)$. We computed that, for $7\leq k\leq 12$,

\begin{eqnarray*}
w &=& (a^{k-2}bb)a^{k-1} (a^{k-2}bb) ba^{k-4} ((a^{k-2}bb)ba(a^{k-2}bb)bba^{k-5})^{k-6}\\
&&(a^{k-2}bb)ab (a^{k-2}bb) bba^{k-3} (a^{k-2}bb) ba^{k-3} (a^{k-2}bb)
\end{eqnarray*}

is a minimal uncompletable word for $S'_k$. Thus $UWL(k)\geq 3k^2-9k+1$ for $7\le k\le 12$.
\end{example}

The set $S'_k$ is obtained from the set $S_k$ considered in Example \ref{genrest} by adding just the word $b^4$.

%%%%%%%%%%%%%%%%%%%%%%%%%%%%%%%%%%%%%%%%%%%%%%%%%%%%%%%%%%%%%%%%%%%%%%%%%%%%%%%%%%%%%%%%%%%%
\section{On the structure of minimal uncompletable words}
%%%%%%%%%%%%%%%%%%%%%%%%%%%%%%%%%%%%%%%%%%%%%%%%%%%%%%%%%%%%%%%%%%%%%%%%%%%%%%%%%%%%%%%%%%%%

Let $u$ be an unbordered word of length $k$, and $S=\Sigma^{k}\setminus \{u\}$. Any uncompletable word for $S$ must contain the word $u$ as a factor, and any word that contains an unbordered factor $u$ can be uniquely written under the form

$$w=v_0uv_1uv_{2}\cdots v_muv_{m+1}$$

with $v_i\in \Sigma^*\setminus \Sigma^*u\Sigma^*$.

Actually, we can say a little bit more on the structure of minimal uncompletable words.

\begin{proposition}\label{prop:struct}
Let $u$ be an unbordered word of length $k$ and $S=\Sigma^{k}\setminus
\{u\}$. Then $u$ is both a prefix and a suffix of any minimal
uncompletable word for $S$, that is, any minimal uncompletable word for
$S$ is of the form

$$w=uv_1uv_{2}u\cdots v_mu$$

with $v_i\in \Sigma^*\setminus \Sigma^*u\Sigma^*$, $m\ge1$.

\end{proposition}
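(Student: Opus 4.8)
The plan is to reduce the completability of a word to a purely combinatorial condition on the positions at which $u$ occurs, and then to exploit that condition together with the minimality of $w$. Since every element of $S=\Sigma^{k}\setminus\{u\}$ has length exactly $k$, a word belongs to $S^{*}$ if and only if its length is a multiple of $k$ and, when it is cut into consecutive blocks of length $k$, none of these blocks equals $u$. Consequently a word $x$ is completable (i.e.\ $x\in\Fact(S^{*})$) precisely when one can choose $p,q\in\Sigma^{*}$ with $|pxq|\equiv 0\pmod{k}$ so that the block decomposition of $pxq$ avoids $u$. The crucial observation is that only the blocks lying entirely inside $x$ are really constrained: there is at most one block meeting the left padding and at most one meeting the right padding, each such block contains at least one free padding letter, and since $|\Sigma|=2$ that letter can be chosen independently so as to make the block differ from $u$; the same remark lets us fill any purely padding block by a non-$u$ word. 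Hence completability depends only on the offset $r=|p|\bmod k\in\{0,\dots,k-1\}$, and the blocks entirely inside $x$ under this offset are exactly the length-$k$ factors of $x$ starting at positions $\equiv -r\pmod{k}$.

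This yields the characterization I would use throughout. Writing $O=\{\,p : 0\le p\le |w|-k,\ u \text{ occurs in } w \text{ at position } p\,\}$ for the set of starting positions of the occurrences of $u$ in $w$, the word $w$ is \emph{uncompletable} if and only if $O$ meets every residue class modulo $k$: indeed $w$ is completable via the offset $r$ exactly when the class $-r\bmod k$ is disjoint from $O$, so $w$ is uncompletable exactly when no such class exists. Because $u$ is unbordered, two occurrences of $u$ in $w$ cannot overlap, so the elements of $O$ are pairwise at distance at least $k$; this makes the decomposition $w=v_0uv_1u\cdots uv_N$ (with each $v_i$ free of the factor $u$) well defined, and since covering all $k$ residue classes by singletons requires at least $k$ elements, it forces $|O|=N\ge k\ge 2$, so at least two occurrences are present.

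The minimality argument is then short. Suppose $w$ is a minimal uncompletable word and that $v_0\neq\epsilon$, and set $w'=uv_1u\cdots uv_N$, the suffix of $w$ obtained by deleting $v_0$. Passing to a suffix creates no new occurrence of $u$, and every occurrence of $w$ survives (all occurrences start at positions $\ge|v_0|$), so the occurrence set of $w'$ is exactly $O-|v_0|$. Translating every position by the constant $-|v_0|$ permutes the residue classes modulo $k$, hence $O-|v_0|$ still meets every class and $w'$ is uncompletable by the characterization; but $|w'|<|w|$, contradicting minimality. Therefore $v_0=\epsilon$, and the symmetric argument applied to the suffix $v_N$ gives $v_N=\epsilon$. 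Thus $w=uv_1u\cdots uv_{N-1}u$ begins and ends with $u$, with $m=N-1\ge 1$, which is the asserted form.

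The part that requires the most care is the first reduction: one must verify rigorously that every block overlapping the padding can be completed to a word different from $u$, independently of the others, so that uncompletability is genuinely equivalent to the residue condition on $O$. Once this characterization is secured, the unborderedness of $u$ guarantees a clean, non-overlapping occurrence structure, and the prefix/suffix claim follows from the elementary observation that deleting a $u$-free border merely shifts all occurrence residues by a constant, a bijection of $\mathbb{Z}/k\mathbb{Z}$ that preserves the covering property while strictly shortening the word.
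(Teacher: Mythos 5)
Your proof is correct, but it takes a genuinely different route from the paper's. The paper argues locally and never characterizes uncompletability: if $u$ were not a suffix of a minimal uncompletable word $w$, write $w=w'x$ with $x\in\Sigma$; by minimality $w'$ is completable, and since all words of $S$ have length at most $k$, some covering of $w'$ has a block boundary at distance less than $k$ from its right end, so $w=pz$ with $p\in\Suff(S^{*})$, $|z|\le k$ and $z\ne u$ (as $u$ is not a suffix of $w$); extending $z$ on the right to a word of $\Sigma^{k}\setminus\{u\}$ shows that $w$ itself is completable, a contradiction, and the prefix case is symmetric. You instead first prove the global characterization ``$w$ is uncompletable if and only if the set $O$ of starting positions of occurrences of $u$ meets every residue class modulo $k$'' --- essentially the key lemma of \cite{Prib09} behind the $k^{2}+k-1$ bound --- and then observe that deleting a $u$-free prefix or suffix translates $O$ by a constant, a bijection of $\mathbb{Z}/k\mathbb{Z}$ that preserves the covering property while shortening the word. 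Your route proves more where it applies: it yields $|O|\ge k$, hence $m\ge k-1$, and in particular it settles $m\ge1$ explicitly, a point the paper's proof leaves implicit (prefix occurrence and suffix occurrence of $u$ must be shown distinct). The paper's route is more robust: it uses only that words of $S$ have length at most $k$ and that $\Sigma^{k}\setminus\{u\}\subseteq S$, so it proves verbatim the remark following the proposition --- the same statement for $S=\Sigma^{k}\setminus\{u\}\cup T$ with $T$ a set of shorter words --- whereas your exact-block/offset analysis collapses as soon as $S$ contains words whose length is not exactly $k$. (Both arguments tacitly assume $k\ge2$, which is needed for $m\ge1$ in any case.)
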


\begin{proof}
Let $w$ be any minimal uncompletable word for $S$. Arguing by contradiction, suppose that $u$ is not a suffix of $w$.
Let $w=w'x$, with $x\in \Sigma$. By the minimality of $w$, we have
$w'\in \Fact(S^{*})$, i.e.\ $w'$ can be covered by words in $S$. Since
$S$ does not contain words longer than $k$, there must exist a prefix
$p$ of $w'$ such that $p\in \Suff(S^{*})$ and $|p|>|w'|-k$, i.e.\
$|p|\geq |w|-k$. But then $w$ could be written as $w=pz$, with $|z|\leq
k$ and $z\ne u$. This implies that $w$ could be covered by words of $S$, which is a contradiction.

In an analogous way one can prove that $u$ must be a prefix of $w$.

\end{proof}

Note that Proposition \ref{prop:struct} still holds for non-complete sets of the form $S=\Sigma^{k}\setminus \{u\}\cup T$, for $u$ an unbordered word of length $k$ and $T$ a set of words of length shorter than $k$.

What about the lengths of factors $v_{i}$'s? In all the examples above each $v_{i}$ has length shorter than $k$. Nevertheless, minimal uncompletable words for which this property is no longer true exist.

\begin{example}
Let
$$S_5=\Sigma^k\setminus \{a^3bb\} \cup \Sigma ba\Sigma \cup \Sigma ba \cup J_{5}$$
where $J_{5}=\bigcup_{i=1}^{2}(ba^{i}\Sigma \cup a^{i}b)$, the set as in the Example \ref{genrest} for $k=5$. Then 

$$w=(a^{3}bb)aaaa(a^{3}bb)baa(a^{3}bb)bbabaa(a^{3}bb)baa(a^{3}bb)$$

is a minimal uncompletable word for $S_{5}$.
\end{example}

\section*{Acknowledgements}
A part of this research was done in 2006 during the visits of second- and third-named authors to the University of Salerno and was supported by the MIUR project ``Formal languages and Automata: Methods, Models and Applications''. The authors are grateful to Jeffrey Shallit for fruitful discussions on the subject at the conference WORDS 2009 and encouraging them to write this note.

\bibliographystyle{plain}

\bibliography{incompletable}
\end{document}